\newtheorem{theorem}{Theorem}
\newtheorem{lemma}[theorem]{Lemma}
\newtheorem{conjecture}[theorem]{Conjecture}
\title{In pursuit of the dynamic optimality conjecture}
\author{
John Iacono\thanks{Research supported by NSF grants CCF-1018370 and 0430849.}
}
\date{
  Polytechnic Institute of New York University, Brooklyn, New York, USA 
  }
\begin{document}

\maketitle

\begin{abstract}
In 1985, Sleator and Tarjan introduced the splay tree, a self-adjusting binary search tree algorithm. Splay trees were conjectured to perform within a constant factor as any offline rotation-based search tree algorithm on every sufficiently long sequence---any binary search tree algorithm that has this property is said to be dynamically optimal. However, currently neither splay trees nor any other tree algorithm is known to be dynamically optimal.  Here we survey the progress that has been made in the almost thirty years since the conjecture was first formulated, and present a binary search tree algorithm that is dynamically optimal if any binary search tree algorithm is dynamically optimal.
\end{abstract}

\bibliographystyle{alpha}

\newcommand{\OPT}{\mathrm{OPT}}
\newcommand{\IRB}{\mathrm{IRB}}
\newcommand{\ALT}{\mathrm{ALT}}

\section{Introduction}

A \emph{binary search tree (BST)} is a classic structure of computer science. A binary search tree stores a totally ordered set of data and supports the operations of insert, delete, and predecessor queries. Here we focus only on predecessor queries which we call \emph{searches}.
Since there are no insertions and deletions, we can assume the tree contains the integers from 1 to $n$.

To execute each search in the BST model, there is a single pointer that starts at the root of the tree, and at unit cost can move the pointer to the left child, right child, parent, or perform a rotation with the parent (we call these \emph{BST unit-cost primitives}). In order to properly execute the search it is required that the result of the search be touched by the pointer in the course of executing the search. This model was formalized in \cite{DBLP:journals/siamcomp/Wilber89}.

We consider search sequences $X$ of length $m$: $X=x_1, x_2, \ldots x_m$. To avoid issues with small sequences and the initial state of the tree, we assume $m$ is sufficiently long (often only $m=\Omega(n)$ is needed) and that the tree is in some canonical initial state. 
A BST-model algorithm is simply a way of choosing a sequence of BST unit cost primitives to execute each search. A BST-model algorithm is \emph{online} if its choice of BST unit cost primitives to execute search $x_i$ is a function of $x_1, \ldots x_i$. The online BST model is still very permissive, as only BST-model unit cost operations are counted, and unlimited computation could be done to determine these operations. What is normally thought of as a BST is an online BST model algorithm that can be implemented on a BST where $O(\log n)$ bits of data can be augmented on every node, and where unit cost operations are chosen based on the current search, the contents of the node currently pointed to, including any augmented data, and $O(\log n)$ bits of global state. Such a BST algorithm is called a \emph{real-world BST}, a term coined by \cite{DBLP:conf/icalp/BoseCFL12}.
We let $R_A(X)$ denote the cost in the BST model to execute $X$ using some BST-model algorithm $A$.

Let $\OPT(X)$ be the fastest runtime of any BST that can execute $X$; that is $\OPT(X)=\min_A R_A(X)$. Given enough time (i.e.~exponential in $m$), $\OPT(X)$ can be computed exactly, and an offline algorithm $A$ such that $R_A(X)=\OPT(X)$ can be produced.

Splay trees are a BST structure introduced by Sleator and Tarjan \cite{DBLP:journals/jacm/SleatorT85}. They use a simple restructuring heuristic, to move the searched item to the root. This heuristic has the following effect on nodes other than the one searched: if the node $x$ is at depth $d$ and $l$ of the ancestors of $x$ are on the search path, after the search $x$ will be at depth $d+\frac{l}{2}+O(1)$. The work of \cite{DBLP:journals/jal/Subramanian96} explores a class of heuristics that have the same general properties of splay trees. Splay trees have been proven to have a number of amortized bounds on their performance, including such basic facts as $O(\log n)$ amortized runtime per search. However, the focus of this work is on the \emph{dynamic optimality conjecture}:

\begin{conjecture}[Dynamic Optimality Conjecture \cite{DBLP:journals/jacm/SleatorT85}]
$ R_{\mathrm{splay}}(X) = O(\OPT(X))$
\end{conjecture}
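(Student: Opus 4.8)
\medskip
\noindent\textbf{Proof proposal.}
The plan is to route the conjecture through a tight combinatorial characterization of $\OPT$. Viewing an access sequence $X$ as a set of points in the plane (time versus key), any BST execution corresponds to a superset of these points that is \emph{arborally satisfied}---every axis-aligned rectangle spanned by two points of the set lying in distinct rows and columns contains a third point of the set---and $\OPT(X)$ equals, up to a constant factor, the size of the smallest such superset. Against this, the \emph{independent rectangle bound} $\IRB(X)$ is the size of the largest family of pairwise non-overlapping rectangles spanned by points of $X$, and it is known that $\IRB(X) = O(\OPT(X))$. I would try to establish the two missing inequalities: \textbf{(A)} $\OPT(X) = O(\IRB(X))$, and \textbf{(B)} $R_{\mathrm{splay}}(X) = O(\IRB(X))$. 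Together with the known direction these chain to $R_{\mathrm{splay}}(X) = O(\OPT(X))$.

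For \textbf{(A)}, the natural vehicle is an explicit (offline, in fact online) algorithm: the geometric greedy algorithm, which processes the point set row by row in time order and in each new row inserts exactly those keys forced by an as-yet-unsatisfied rectangle having a point below it. I would charge the points greedy inserts to a family of pairwise-independent rectangles---each greedily inserted point is a corner of a rectangle whose opposite corner is a ``witness'' below it---and argue, via a sweep/exchange argument across the rows, that a constant fraction of these witnessing rectangles can be chosen mutually non-overlapping. That gives greedy cost $O(\IRB(X))$ and hence $\OPT(X) = O(\IRB(X))$.

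For \textbf{(B)}, I would look for a single potential function $\Phi(T)$ on splay-tree states that subsumes the working-set, dynamic-finger, and static-optimality bounds (all of which follow from $\IRB$-optimality) by making the amortized cost of the $i$-th search $O\!\left(1 + r_i\right)$, where $r_i$ is the number of rectangles of a fixed optimal packing whose time-span closes at $x_i$. Concretely I would try to relate $r_i$ to the change in the ``recently accessed'' structure at step $i$ and match it against the telescoping of $\Phi$, using a weighting of subtrees by how recently they were touched (as in the working-set analysis) refined to track left/right alternation, which is precisely what an independent rectangle records. If a clean potential proves elusive, the fallback is a combinatorial comparison: show that the multiset of nodes appearing on splay's search paths is, up to a constant factor, no larger than the multiset of points inserted by geometric greedy, i.e.\ that splay simulates greedy within $O(1)$.

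The crux---and the reason the conjecture has resisted proof for nearly three decades---is step \textbf{(A)}: no lower bound for $\OPT$ is presently known to be tight. Wilber's first bound is provably not tight, and it is open whether the interleave bound, or the larger $\IRB$, is; equivalently, it is open whether geometric greedy is dynamically optimal. The charging/sweep argument above is exactly where a genuinely new structural insight into optimal (or greedy) executions is required---most plausibly an exchange argument transforming an arbitrary arborally satisfied superset into one aligned with a rectangle packing at the cost of only a constant factor. Step \textbf{(B)} is also not routine: the sum-of-logarithms potentials standard in the splay literature have known expressive limitations, so \textbf{(B)} may itself have to proceed via the simulation-of-greedy route rather than a single potential. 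I expect \textbf{(A)} to be the main obstacle, and I would attack it first.
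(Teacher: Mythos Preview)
The statement you are trying to prove is a \emph{conjecture}, not a theorem. The paper does not prove it; indeed the entire paper is a survey of partial progress toward it, and the introduction says explicitly that ``currently neither splay trees nor any other tree algorithm is known to be dynamically optimal.'' So there is no ``paper's own proof'' to compare against.

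Your proposal is not a proof but a research program, and to your credit you say so: you correctly identify that both step \textbf{(A)} (tightness of the independent rectangle bound, equivalently dynamic optimality of greedy) and step \textbf{(B)} (splay matches $\IRB$) are open. The paper discusses exactly these obstructions. On \textbf{(A)}, it notes that the sweep construction $\IRB^{\pm}$ and the greedy output look tantalizingly similar (Figure~\ref{geoview}) but that no one has succeeded in bounding one by a constant times the other; the ``secondary effects'' paragraph explains precisely why the charging argument you sketch does not go through as stated---the points you add to satisfy the original rectangles create new unsatisfied rectangles among themselves, and it is unknown whether this cascade is asymptotically negligible. On \textbf{(B)}, the paper observes that all known closed-form upper bounds on splay (working set, dynamic finger, and proposed unifications) are provably not tight, which is why a single potential of the sum-of-logs type is unlikely to deliver what you want. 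Your fallback---showing splay simulates greedy within $O(1)$---is also open and would itself resolve the conjecture if combined with \textbf{(A)}.

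In short: there is no gap to name in the sense of a flawed step, because there is no proof here. What you have written is an accurate map of where the hard parts lie, consistent with the paper's own assessment, but neither \textbf{(A)} nor \textbf{(B)} is currently within reach by the methods you describe.
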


We refer to any BST algorithm $A$ such that $R_{A}(X) = O(\OPT(X)+f(n))$ for some $f(n)$ as \emph{dynamically optimal}.
Rather then focus on splay trees themselves, we focus on whether there are any dynamically optimal BSTs. We present several different formulations of this, from weakest to strongest.

\paragraph{Offline optimality.} It is possible to compute in polynomial time (in, say, the RAM model) an algorithm $A$ such that $R_A(X)=O(\OPT(X))$? As we have noted that computing such an algorithm is possible, given enough time, this question concerns only running time, and is the easiest of the questions presented. We believe that computing $\OPT(X)$ exactly is likely to be NP-complete. NP completeness for this very closely related problem was presented in \cite{DBLP:conf/soda/DemaineHIKP09}: instead of a sequence of single searches to be executed on a BST, a sequence of sets of items to be searched are provided and the algorithm can order the searches in each set in whatever manner is beneficial to it. Computing the exact optimal cost for executing such a sequence of sets of searches was shown to be NP-complete. 

\paragraph{Online optimality.} Is there an online BST algorithm $A$ such that $R_A(X)=O(\OPT(X))$?
In this statement of the problem, $A$ could do significant computation in order to determine which BST unit-cost operation to perform at every step, subject only to the requirement that it is online. This conjecture represents the claim that there is no asymptotic difference in power between online and offline algorithms in the BST model. Such equivalence in power between online and offline power is generally not possible in more permissive models, and is typically only found in very strict models such as the optimality of the move-to-front rule for search in a linked list \cite{DBLP:journals/cacm/SleatorT85}. In more permissive models like the RAM, an offline algorithm could fill an array $A$ such that $A[i]=x_i$ and thus could trivially achieve offline performance that an online algorithm could never match.

\paragraph{Online real-world optimality.} The end goal of this line of research is to obtain, not just an online BST algorithm $A$ such that $R_A(X)=O(\OPT(X))$, but one where the runtime in the BST model dominates the runtime and which could be reasonably implemented. The real-world BST model gives a formalization of that goal, and data structures such as splay trees meet the definition of a real-world BST.

Our hope is that solving the offline optimality problem is the bottleneck, and that a solution to that could be transformed into an online real-world algorithm.

We begin this survey by reviewing a geometric view of the problem. We then summarize the results on the lower and upper bounds for $\OPT(X)$. Finally we present a conditional result whereby a concrete online algorithm is presented which is dynamically optimal if any online algorithm is dynamically optimal. Throughout the presentation we try to identify possible avenues for improvement in as well as the challenges of each of the approaches presented.

\section{Geometric view}

We now present a geometric view of a binary search tree algorithm $A$, first introduced in~\cite{DBLP:conf/soda/DemaineHIKP09}. This geometric view was created in the hope that reasoning with this view would be significantly simpler than reasoning with rotation-based trees.

Suppose you have a BST algorithm executing some search sequence $X$. Let the set of points $G_X=\{(x_i,i)\}$ be the \emph{geometric view} of this sequence. Consider the execution of a BST algorithm $A$ on this sequence. Let $t_A(i)$ be all the nodes touched by the pointer in executing $x_i$. Then let $G^A_X=\{(j,i)|j\in t_A(i)\}$; that is, $(i,j)$ is in $G^A_X$ if algorithm $A$ when executing $x_i$ touches $j$. Observe that $G_X \subseteq G^A_X$ for any algorithm $X$, since the item to be searched must be touched in the execution of $X$ by any valid algorithm. This thus gives $G^A_X$ as a plot of everything touched in an execution, seemingly stripping away the specific pointer movements and rotations performed. Also, the runtime of $A$ on $X$ is $O(|G^A_X|)$.

Call two points $p,q$ a set of points $P$ \emph{arborally satisfied (AS)} if there is at least one other point in $P$ in or on the orthogonal rectangle they define. Call 
a set of points $P$ an \emph{arborally satisfied set (ASS)} if all pairs of points $p,q \in P$ that differ in both coordinates are AS.
We have shown that all point sets $G^A_X$ are ASS. Moreover, we have shown that given a ASS point set $P$ where $G_X \subseteq P$, there is a BST algorithm $A$ with runtime $O(|P|)$ such that $G^A_X=P$. 

Thus, the following two problems are equivalent: (1) find an $O(1)$-competitive offline BST to execute $X$, and (2) find an $O(1)$-factor approximation of a minimal ASS superset of $G^A_X$.

One can observe a kind of duality between the time a key is searched and its value.
Suppose $X$ is a permutation of $1..n$. Then the transpose of $G_X$ is also a permutation $X$ and represents some sequence $X^T$ where if $x_i=j$ in $X$ then $x^T_j=i$ in $X^T$. 
As the horizontal and vertical coordinates are treated symmetrically in the definition of ASS, it follows that $\OPT(X)=\OPT(X^T)$. 

Note that this statement of the geometric view is offline. An algorithm for computing an ASS superset $P$ of $G_X$ is said to be online if the points of $P$ are only a function of the points of $G_X$ with the same or smaller $y$-coordinate. Note that an online BST algorithm yields an online ASS superset algorithm directly from the preceding definitions. However, the conversion in the other direction does not preserve online-ness. The method used by Fox \cite{DBLP:conf/wads/Fox11} to turn a particular online ASS superset algorithm into an online BST algorithm could probably be adapted to turn any online ASS superset algorithm into an online BST algorithm.

While computing $\OPT(X)$ offline seems like a clean geometric optimization problem, a solution has remained elusive. The main stumbling block is that it is fairly easy to come up with a minimal superset $P$ of $G_X$ such that all pairs of points in $G_X$ are AS with respect to $P$; the problem is that the points in $P \setminus G_X$ must all also be pairwise AS for the set $P$ to be ASS.

\section{Lower bounds}

One defining feature of this problem is the existence of non-trivial lower bounds on $\OPT(X)$ for particular fixed $X$, as opposed to lower bounds where $X$ is drawn from a distribution. There are several known bounds, we present each of them separately.

\paragraph{Independent rectangle bound \cite{DBLP:conf/soda/DemaineHIKP09}.}
Given a set of rectangles $R$, each of which is defined by two points in $P$, we say $R$ is an independent set of rectangles if no corner of one rectangle lies properly inside of another. Define $\IRB(P)$ to be the size of the maximum set of independent rectangles possible with respect to point set $P$. It has been shown that $\OPT(X)=\Omega(\IRB(P_X))$.

\begin{figure}
\begin{center}
\includegraphics[width=2in]{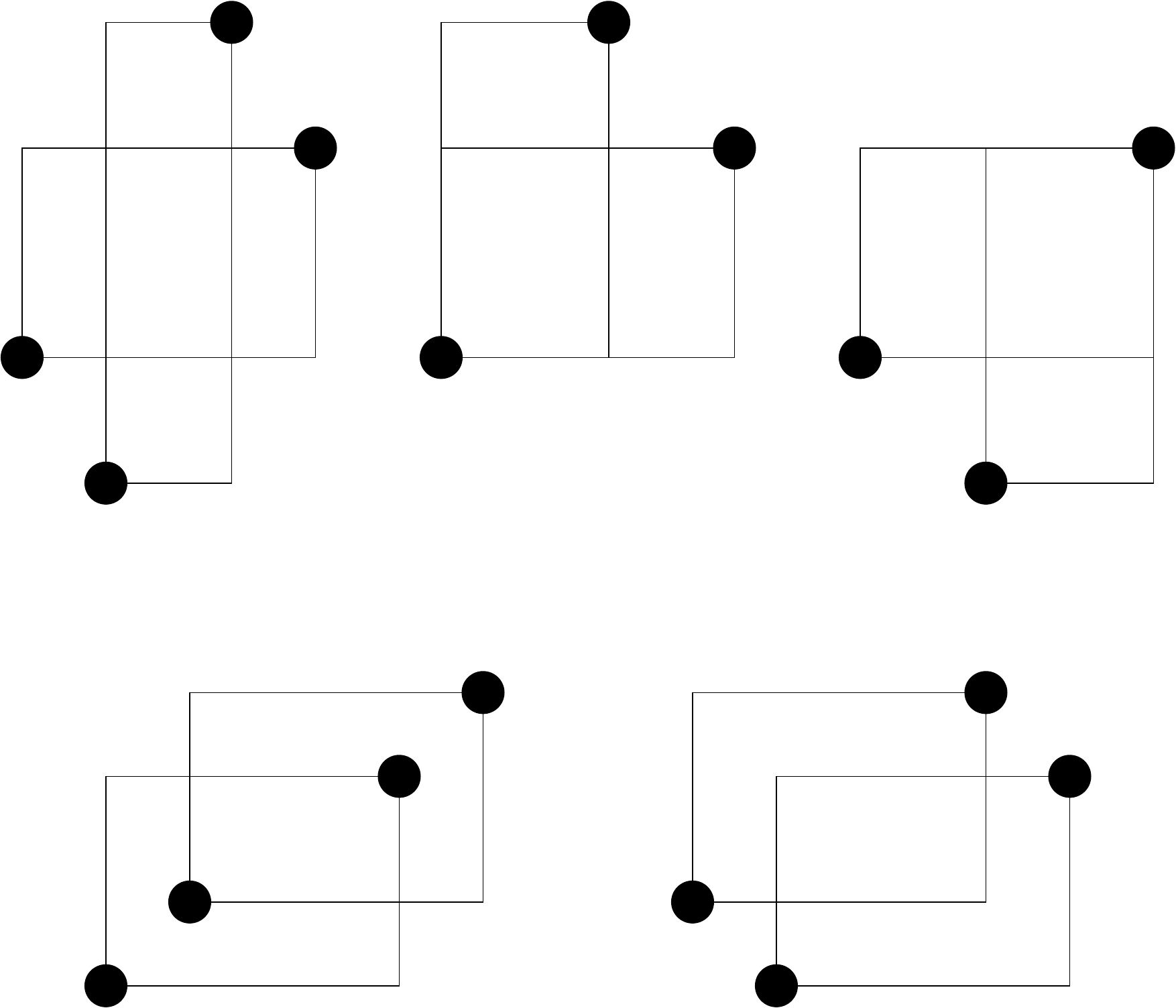}
\end{center}
\caption{Here we consider all combinatorial cases of overlapping $+$-rectangles are independent, and when they are not. The top three pairs of rectangles are independent; the bottom two are not.}
\label{fig:ir}
\end{figure}

Computing a value which  is $\Theta(\IRB(G_X))$ can be done with a simple sweep algorithm. We call an unsatisfied (i.e. not AS) rectangle defined by two points a $+$-type rectangle if its upper point is to the right of its lower point. Let $\IRB^+(G_X)$ to be the point set obtained from $P$ by repeatedly looking at the lowest  $+$ type unsatisfied rectangle (with lowest upper coordinate), and adding a point to the set in the upper-right corner of this rectangle. This process is repeated until no more unsatisfied $+$-type rectangles remain. $\IRB^-(G_X)$ is defined in a symmetric way. See Figure~\ref{geoview} for an illustration of the result of this process. We have shown that $\IRB(P)=\Theta( |\IRB^-(G_X)| +|\IRB^+(G_X)|)$.

The independent rectangle bound is the best known lower bound on $\OPT(X)$, but there are two other bounds that predate it due to Wilber, which are interesting in their own right. They were both introduced at the same time in \cite{DBLP:journals/siamcomp/Wilber89} using the language of BSTs and we briefly state them here in the language of the geometric representation.

\paragraph{Alternation bound.} The alternation bound can be computed using a the geometric view $G_X$ as follows: pick a vertical line $\ell$, and sweep in order of $y$-coordinate through the points of $X$, counting the number of times the points of $G_X$ in this sweep alternate between the left and right side of the line $\ell$. Split the point set $G_X$ into two sets using the line and repeat the process recursively on each side. See Figure~\ref{fig:wi} for an illustration of this process. The lower bound is the total number of alternations in all levels of the recursion. In computing this bound, there is freedom to choose the vertical line at each step; classically the line at each step is chosen to go though the median $x$-coordinate, and we will call this lower bound $\ALT(X)$. While $\OPT(X)=\Omega(\ALT(X))$ it can be shown that it is not always tight; for all $n$ there is a sequence $X_n$ of size $n$ such that $\ALT(X_n)=\Theta(\OPT(X_n) \log \log n)$. Such a sequence can be created randomly by picking $O(\log n)$ nodes that are to the left of the dividing line in all levels of the recursion but one and randomly searching only them. Each random search must take time $O(\log \log n)$ in expectation but will only contribute $O(1)$ to the lower bound calculation. There are several possible avenues for improving this bound: one would be to figure out how to choose the lines best (and \cite{dhthesis} shows that you don't have to restrict the lines to vertical ones), or perhaps change them dynamically. Another avenue for improvement would be to somehow do another type of recursion that would directly reduce the gap exhibited above, possibly reducing it $O(\log \log \log n)$ or $O(\log^* n)$. The main interest in this lower bound is that is it the only bound that has been turned into an algorithm (see Tango trees below); thus improvements on this bound have a reasonable chance of being able to create a better algorithm than what is known.

\begin{figure}
\begin{center}
\includegraphics[width=2in]{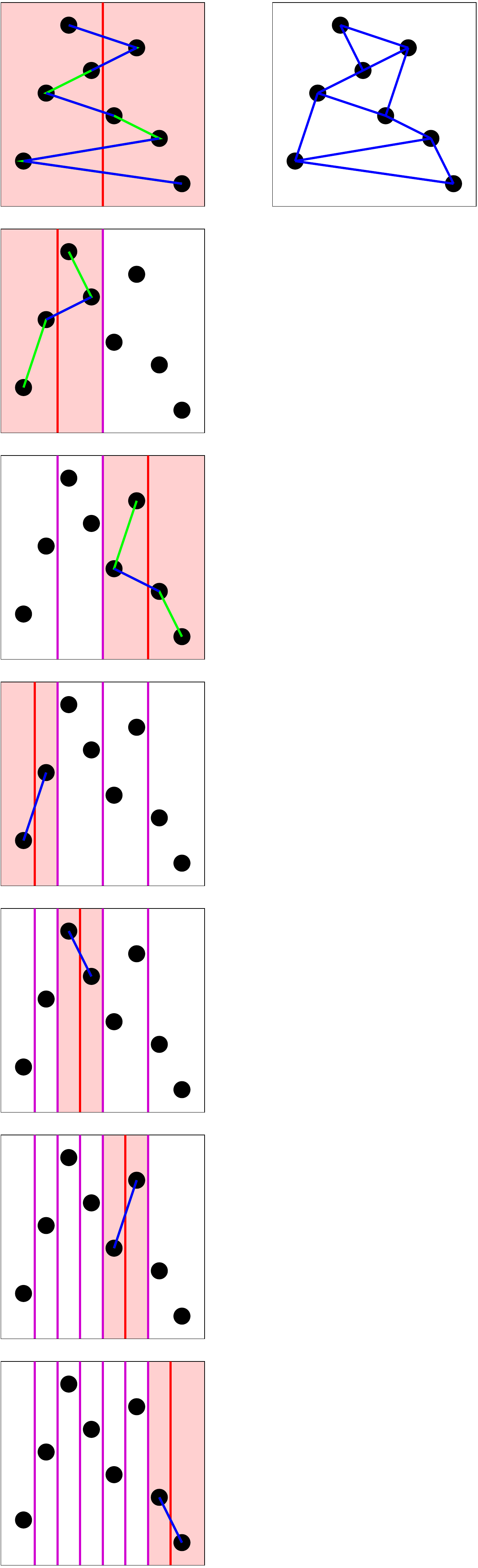}
\end{center}
\caption{The interleave bound. As each line is introduced we restrict out view to the cell bounded by previous lines in and containing the current one. In this cell, we connect the points, top to bottom, and a connection that crosses the current line is colored blue and contributes one unit to the lower bound. Green connections are do not cross the current dividing line and do not contribute to the lower bound. As all blue lines generated are distinct, we can visualize the entire bound using the right diagram.}
\label{fig:wi}
\end{figure}

\paragraph{Funnel bound.}
Given a point $(x_i,i) \in G_X$, the \emph{funnel} of $x_i$, $f(x_i)$ is the set of points below $x_i$ that form unsatisfied rectangles with $(x_i,i)$. For each funnel, look at the points in the funnel sorted by $y$ coordinate, and count the number of alternations from the left to the right of $x_i$ that occur; this is the amortized lower bound for $x_i$; computing and summing this value for all $x_i$ in $X$ gives the lower bound. A different, tree-based view of this bound is as follows: execute $X$ on a binary search tree, and perform a series of single rotations to bring $x_i$ to the root; this BST algorithm was first proposed by Allen and Munro in \cite{DBLP:journals/jacm/AllenM78}. The amortized lower bound for $x_i$ is the number of turns on the path from the root to $x_i$. It is worth noting that this will maintain at each step a treap where the heap value of each $x$ is the last $i$ such that $x_i=x$, or equivalently the working set number of the item. This tree view gives an immediate idea for an algorithm---since only the turns in the tree contribute to the lower bound, is it possible to create a method to maintain a representation of the treap so that an item can be searched in a time proportional to the number of turns in the tree? The main obstacle to this approach is that there could be a linear number of items that are one turn away, thus some kind of amortization would be needed to show that that situation could not happen in each search.

\begin{figure}
\begin{center}
\includegraphics[angle=180,width=1.5in]{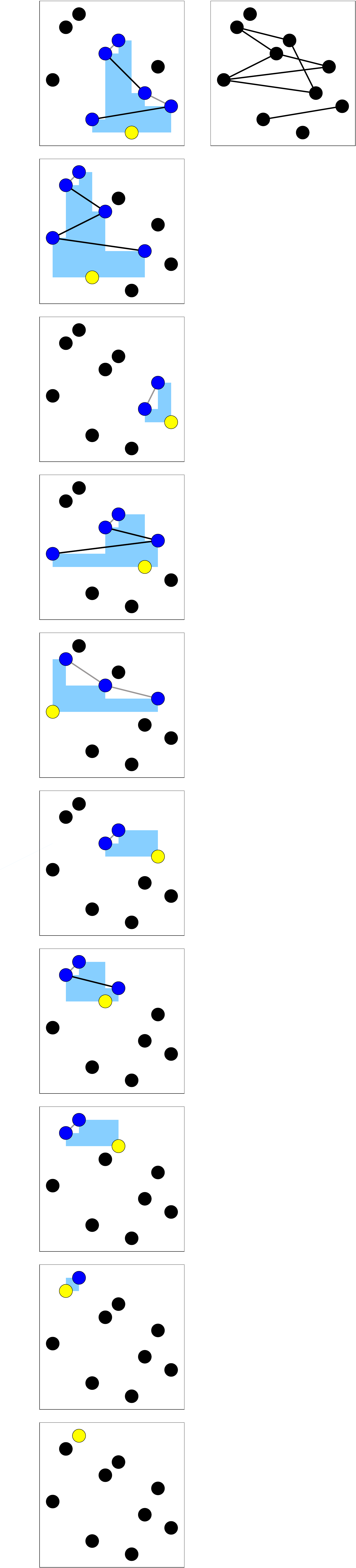}
\end{center}
\caption{The funnel bound. For each yellow node, the blue nodes are the nodes in its funnel, and the lines correspond to pairs in the funnel which cross the yellow node and yield one unit of lower bound. }
\label{fig:wii}
\end{figure}

\paragraph{Relationship among these bounds}

All of these three bounds are lower bounds, that is all of them compute values which are $O(\OPT(X))$.
No relationship is known among the alternation bound and the funnel bound. However, the alternation bound and the funnel bound have been shown to correspond to  sets of independent rectangles, and thus they are asymptotically implied by the independent rectangle bound \cite{DBLP:conf/soda/DemaineHIKP09}. We have mentioned that the alternation bound is not known to be tight. However, while the independent rectangle bound asymptotically implies the funnel bound, the converse is unknown, and we conjecture that they are equivalent.

\paragraph{Improving the bounds.}
In proving the independent rectangle bound, it was shown the need to put a point in each independent rectangle to satisfy the empty rectangles in the original set. Now, adding these points could cause new unsatisfied rectangles, including those that are defined entirely by added points. We call these problems \emph{secondary effects} and it is easy to show that they occur. The question is, are these secondary effects asymptotically significant or not? If one believes that the independent rectangle bound is tight, they could try to show that if one were to put points to satisfy the rectangles in phases, where the points in each phase satisfy the unsatisfied rectangles in the previous phase, the number of problems would form some kind of exponential progression. On the other hand, to show the bound is not tight one would 
need an example where these secondary effects dominate.

\section{Upper bounds}

In this section we survey the various progress towards the dynamic optimality conjecture by producing actual BSTs.

\subsection{Concrete bounds}

One approach has been to come up with concrete closed-form bounds that express the runtime of a particular BST data structure. These bounds initially began with the analysis of splay trees \cite{DBLP:journals/jacm/SleatorT85}, with the working set bound, which says a search is fast if it was searched recently, and the dynamic finger bound \cite{DBLP:journals/siamcomp/Cole00,DBLP:journals/siamcomp/ColeMSS00}  which says that a search is fast if it is close in key value to the previous search. In \cite{DBLP:journals/tcs/BadoiuCDI07}, we proposed combining these bounds into one which bounds a search as being fast if it is close in key value to some search that has been searched frequently; a BST-model algorithm with this bound was claimed in \cite{jdthesis} but may be buggy \cite{sleatortalk}. These bounds all can easily be seen to not be tight, that is there are classes of sequences $X$ such that they are $\omega(\OPT(X))$. Can refining bounds of this type have any hope of a closed form solution that is an asymptotically tight expression of $\OPT(X)$? For example, in the closely related problem of the runtime of the best static tree where each search beings where the previous one ended, a closed-form expression for the asymptotic runtime was obtained \cite{DBLP:journals/corr/abs-1304-6897}. However, for optimal BST's with rotations, this approach seems difficult as it is easy to come up with search sequences which can be executed quickly on a BST but which all known concrete upper bounds are not tight and which seem to defy a simple formulaic characterization.

%

\subsection{Tango trees \cite{DBLP:journals/siamcomp/DemaineHIP07}}
In the language of competitive analysis, the problem of finding a dynamically optimal binary search tree is to find one which is $O(1)$-competitive with the best offline tree. 
Any tree with $O(\log n)$ search time is trivially $O(\log n)$ competitive. Tango trees are a data structure that was created to have a non-trivial approximation factor of $O(\log \log n)$.
Specifically, they are created to be within a $O(\log \log n)$ factor of the alternation lower bound.

We present another view of computing the alternation bound, one which leads easily to the central idea of the Tango tree construction. This computation is presented algorithmically. To compute the alternation bound, envision a \emph{reference tree} which is a balanced binary containing $[1..n]$. Each nonleaf node in the tree has one of its children designed as the preferred child---the preferred child is designated based on which subtree of that node has had the most recent search. Now to compute the bound, go though the sequence $X$ and execute each search on the reference tree. The process of executing a search involves starting at the root and following children, which are either preferred or not; at the end of the search the nodes where the search did not follow the preferred child are updated to reflect that the preferred child has changed and is now aligned with the search path; the sum of these preferred child changes is equivalent to the alternation lower bound.
Given a node, call the preferred path the path in the reference tree obtained by following preferred children starting from that node until a leaf is reached. Due to the balanced nature of the reference tree, the size of any preferred path is $O(\log n)$. Given this setup, the idea behind the Tango tree is to store each preferred path in a balanced binary search tree of height $O(\log \log n)$. Thus a search in the reference tree that involved following $\log n$ nodes among $k$ preferred paths (and this has a lower bound of $O(k)$) can be executed in time $O((k+1)\log \log n)$ time.
Details of the Tango tree involve arranging the BST's created from each preferred path into one BST, and using split and merge operations to maintain the changing of the preferred paths.
However, this method is limited by the fact that the alternation bound is sometimes tight and sometimes off by a $\Theta(\log \log n)$ factor. Thus, no improvement in the competitive ratio is possible while still using the alternation bound as-is as the basis of the competitive ratio. Note that \cite{DBLP:conf/wads/DerryberryS09} improved Tango trees to have some additional desirable properties, such as $O(\log n)$ worst-case time per search, as opposed to $O(\log n \log \log n)$ in their original presentation.

\subsection{Greedy}
If one were to come up with an idea for candidates for the best possible offline method there is one greedy method that stands out. Search for the current item. Then for (asymptotically) free one can rearrange the nodes on the search path into any tree. The heuristic that makes the most sense would be to place the node to be searched next at the root, or if the node to be searched next is not on the search path, place the subtree that contains it as close to the root at possible. Then, recurse on the the remaining nodes on the search path.  This method was first proposed by Lucas \cite{jltr}.

In the geometric view, there is a natural greedy method to find an ASS superset of $G_X$: from bottom to top, add points on every row so that the point set is satisfied from that row down. See Figure~\ref{geoview} for an illustration of this process.
It turns out that by applying the geometric-to-BST conversion to this method, Lucas's greedy tree method is obtained; thus the two greedy methods are in fact identical.

While this method seems intuitively to be a good idea, basic facts like $O(\log n)$ amortized time per search were not known until the work of Fox \cite{DBLP:conf/wads/Fox11}, who also showed that there is an equivalent online BST to this method. Showing that this method which greedily looks into the future has an equivalent online method provides some support for the belief that the best online and offline BST algorithms have asymptotically the same runtime.

In the geometric view, recall that the independent rectangle lower bound can be computed by sweeping twice though $R_X$ and satisfying the $+$ and $-$ unsatisfied rectangles separately. The greedy method is a single sweep though $R_X$ satisfying both the $+$ and the $-$ rectangles at the same time (again, see Figure~\ref{geoview}). Proving that the point sets obtained though these two methods are within a constant factor of each other would be enough to show the greedy method is a dynamically optimal binary search tree. We have spent much time coding and looking for ways to prove such a correspondence to no avail.

\begin{figure}
\begin{center}
\includegraphics[trim=80 200 100 170, angle=180, width=4.5in]{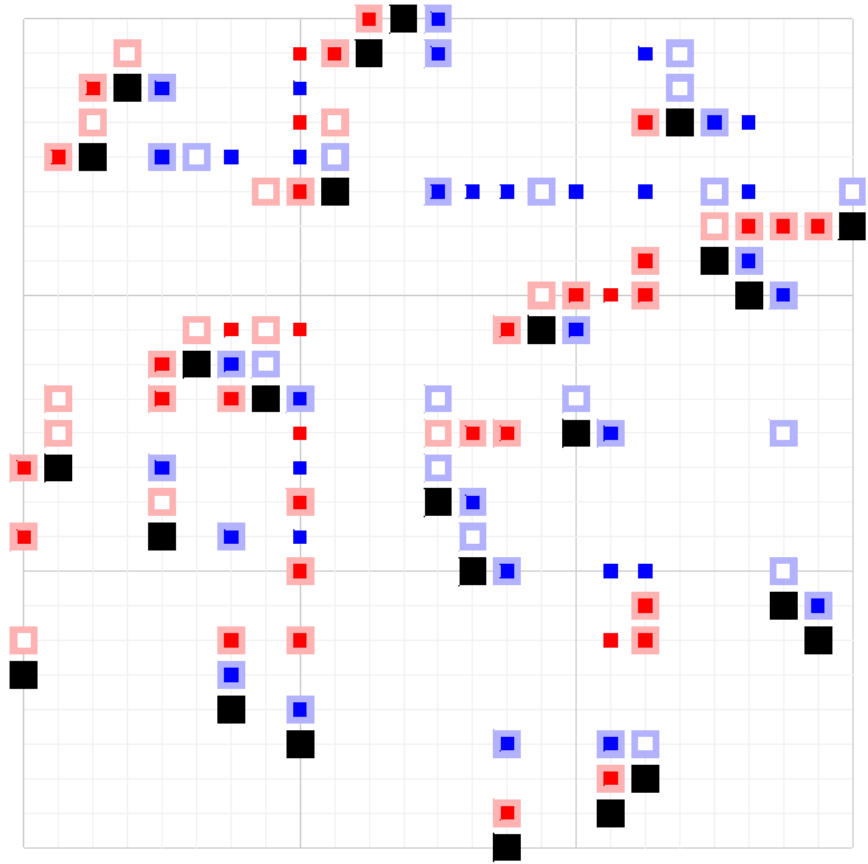}
\end{center}
\caption{The geometric view of binary search trees.
A black dot at location $(x,y)$ represents that we wish to search for key value $x$ at time $y$; it is set set $G_X$. The black dots, combined with the solid blue and red represent an execution of Lucas's greedy future algorithm to execute this search sequence; a dot at $(x,y)$ represents the greedy algorithm touching item with key $x$ at time $y$. The solid dots are \emph{satisfied}, that is for every two dots not in the same row or column, you can find a third one on the rectangle they define. Color simply represents being to the left or to the right of black. The $\Box$-shaped markers are a visual representation of a the incremental construction of the independent rectangle lower bound for the minimal satisfied superset of the black points. Showing Lucas's method is dynamically optimal is equivalent to showing the solid and $\Box$'s are always within a constant factor of each other for any such diagram.} 
\label{geoview}
\end{figure}

\subsection{Combining trees}
In \cite{DBLP:journals/corr/abs-1304-7604}, we have shown that given any constant number of online BST algorithms (subject to certain technical restrictions described in the paper), there is an online BST algorithm that performs asymptotically as well on any sequence as the best input BST algorithm. If the output algorithm did not have to be in the BST model this would be trivial as the input algorithms could just be run in parallel; however the BST-model restriction makes this non-trivial. It is open whether or not it is possible to combine a superconstant number of BST algorithms. This would be of interest as a dynamically optimal BST could be viewed as combining \emph{all} algorithms and taking the minimum. As the number of BST algorithms can be limited to be a function of $n$ (see next section), this opens the possibility of having an algorithm with a runtime of $O(\OPT(X)+f(n))$, for some possibly huge function $n$.


\subsection{Search optimality.} In \cite{DBLP:journals/algorithmica/BlumCK03}, the notion of \emph{search optimality} was presented. The \emph{search cost} of a search BST algorithm is simply the depth of the node to be searched. Any rotations or pointer movements off the search path are free; in this way the BST can be arbitrarily reconfigured between searches at zero cost. It was shown that there is a BST model algorithm for which the search cost is $O(\OPT(X))$. The general method was to use a machine learning approach to finding the best tree after each search based on the searches performed so far. If one were to try to adapt this method to the standard online BST model cost function, a reasonable starting point would be to try to determine if there is any cohesion of the trees produced by the method from one search to the next, and to try to figure out if one could use such cohesion to transform one tree to the next in time proportional the the search cost. 

\section{Online optimality}

In this section we present the only new result of this paper: we present the best possible online BST algorithm for sufficiently long search sequences. In particular, we prove the following:

\begin{theorem} \label{tha}
There is an online BST algorithm $OnOpt$ such that if there is an online algorithm $A$ such that $R_A(X)=O(\OPT(X)+f(n))$ for some function $f(n)$, then there is an algorithm $OnOpt$ and a function $g(n)$ such that $R_{OnOpt}(X)=O(\OPT(X)+g(n))$.
\end{theorem}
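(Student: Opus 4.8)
The plan is to have $OnOpt$ carry out an exhaustive search over all online BST algorithms, simulating the promising ones for free and physically following the one that currently looks best, while amortizing the cost of the (rare) changes of mind. Three features of the model make this conceivable: the online BST model charges nothing for computation, so $OnOpt$ may simulate arbitrarily many candidate algorithms on the prefix seen so far; an arbitrary tree on $[1..n]$ can be reached from any other in $O(n)$ unit-cost primitives, so $OnOpt$ can cheaply switch to another candidate's configuration; and the additive term $g(n)$ is allowed to be an arbitrary function, chosen \emph{after} we are told that an online dynamically optimal $A$ exists, so it may depend on $A$.

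First I would fix an effective enumeration $A_1,A_2,A_3,\dots$ of online BST algorithms: take all Turing machines as descriptions, and whenever a machine fails to output a legal online sequence of BST unit-cost primitives (or fails to halt in time) substitute the behaviour of a fixed balanced tree on that step, so that every index names a genuine online BST algorithm and, say, $A_1$ is a red-black tree. The hypothetical $A$ of the statement is then $A_k$ for some fixed but unknown index $k$; note $k$ depends only on $A$. Now $OnOpt$ runs in epochs. At the start of an epoch, having processed $x_1,\dots,x_i$, it simulates (for free) the first few algorithms of the enumeration --- a count that grows without bound as $i$ grows --- computes $R_{A_j}(x_1,\dots,x_i)$ for each, and selects the minimizing index $j^\star$. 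It then rebuilds its tree, at cost $O(n)$, into exactly the configuration $A_{j^\star}$ would be in after $x_1,\dots,x_i$, sets its simulated internal state of $A_{j^\star}$ accordingly, and physically executes $A_{j^\star}$ on the continuation of the sequence for a bounded amount of work before the next epoch begins and it re-evaluates. Re-evaluation is necessary because $j^\star$ is best only on the prefix so far; a deceptive continuation may force the next epoch to switch away from it. The quantitative point is that once the pool of simulated algorithms includes index $k$, the selected $A_{j^\star}$ is at least as good as $A_k$ on the current prefix, so its cost there is $O(\OPT(X)+f(n))$ --- this lets $OnOpt$ track a near-optimal algorithm without ever computing $\OPT(X)$, and one may additionally use the online-computable lower bound $\OPT(X)=\Omega(\IRB(G_X))$ to sanity-check progress.

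The hard part will be the amortization: bounding the total overhead spent before $OnOpt$ settles, and in particular the number of restarts, so that this overhead is an additive $g(n)$ and not a multiplicative penalty. This is delicate precisely because combining a superconstant number of online BST algorithms with only an $O(1)$ multiplicative loss is open; here we avoid parallel simulation (there is only one pointer) and instead run one candidate at a time, so the only slack we need is additive. A naive doubling schedule makes the number of epochs grow with $m$, and since every epoch past index $k$ can cost $\Theta(\OPT(X))$, even a logarithmic number of them spoils the bound. The intended fix is to trigger re-evaluation only when the cost actually invested in the current candidate has at least doubled \emph{or} the candidate has fallen a factor two behind the best simulated competitor, and to blacklist a candidate that has just been abandoned, reinstating all candidates and enlarging the pool only once everything currently in the pool has been blacklisted. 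Then within one such round the number of restarts is at most the pool size --- a function of $n$ --- and, once the pool contains $A_k$, every candidate $OnOpt$ physically runs has, at the moment it is abandoned, performed only $O(\OPT(X)+f(n))$ work on the prefix it covered; so the epochs in which $OnOpt$ is genuinely tracking an $A_k$-or-better algorithm account for all but an additive amount of work that depends only on $n$ and $k$. Proving that the rounds themselves do not proliferate is the crux and the place where the freedom to let $g$ depend on $A$ is used. Finally, sequences too short for the pool ever to reach index $k$ are harmless: for each $n$ there are only finitely many of them and $OnOpt$'s cost on each is bounded by a function of $n$ and $k$, hence absorbed into $g(n)$. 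Assembling these pieces yields $R_{OnOpt}(X)=O(\OPT(X)+g(n))$, which is the claim.
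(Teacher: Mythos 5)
There is a genuine gap, and it is the one you yourself flag: the amortization of the switching overhead is never established, and the follow-the-leader scheme you sketch does not obviously admit one. Your trigger (a) (re-evaluate when the cost invested in the current candidate has doubled) fires $\Theta(\log(\textrm{total cost}))$ times on any long sequence, even for the good algorithm $A_k$, so $A_k$ itself keeps getting abandoned and blacklisted, the pool keeps getting exhausted, and the number of ``rounds'' grows without bound with $m$ rather than being a function of $n$ and $k$. Within each round, a leader that was best on the prefix can be made arbitrarily bad on the continuation; your trigger (b) only stops it after it has fallen a constant factor behind the best simulated competitor, i.e.\ after you have physically paid an amount comparable to $\OPT$ of the current prefix, and this can happen once per pool member per round. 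Summed over the unboundedly many rounds this is a multiplicative, not additive, loss, and nothing in the proposal (nor any standard follow-the-leader argument) rules it out. A second, independent problem is your enumeration: the hypothesized $A$ is an arbitrary online BST-model algorithm with unlimited computation, so it need not appear in any Turing-machine enumeration, and the claim ``$A$ is $A_k$ for some $k$'' is unjustified.

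The paper closes exactly these two holes by changing the granularity of the competition. It cuts $X$ into epochs of length $f(n)$, forces the tree back to a canonical shape at epoch boundaries (cost $O(n)\le f(n)$ per epoch, which is absorbed because $\OPT$ decomposes over epochs up to $\Theta(f(n))$ per epoch), and treats as \emph{experts} the finitely many per-epoch online BST algorithms---at most $n^{f(n)}5^{O(nf(n))}$ of them, a bound depending only on $n$, with no computability assumption needed. Running the weighted majority (multiplicative weights) algorithm over this finite expert set gives total cost at most $(1+\epsilon)$ times the best single expert plus an additive regret term that is a function of $n$ alone; the hypothesized $A$, restarted each epoch, is one such expert with total cost $O(\OPT(X))$, which yields $R_{OnOpt}(X)=O(\OPT(X)+g(n))$. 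If you want to salvage your route, you would need either to import such a regret bound (effectively reproducing the experts argument) or to prove a new amortization lemma bounding the total cost of abandoned leaders by an additive function of $n$; as written, that lemma is missing and is the heart of the theorem.
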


 The algorithm is decidedly not in the real-world BST model, and is a relatively straightforward application of known methods from learning theory.

We begin by summarizing a classic result from learning theory (see \cite{DBLP:journals/toc/AroraHK12} for a survey of its origins, variants and applications). The setup is that there are a sequence of events $Z=z_1,z_2, \ldots, z_\ell$ which are presented in an online manner---each event is an integer in the range $[1..\rho]$. Before each event is revealed, one of $\eta$ \emph{experts} numbered $[1..\eta]$ is chosen. After the event is chosen a \emph{penalty} is determined based on an $ \eta \times \rho$ table $M$ which assigns penalties to each combination of event and expert; $M[a,z]$ is the penalty if expert $a$ was chosen and event $z$ happened.

Thus, if a single expert $a$ were to be chosen for all events, the total penalty would be 
$\sum_{k=1}^{\ell} M[a,z_k]$. The main result we will use is that it is possible to pick online an expert before each event such that the total penalty is asymptotically that of the best expert:

\begin{theorem}[Weighted Majority Algorithm]
For any $\epsilon>0$, there is an online choice of expert $E=e_1, e_2, \ldots z_\ell$ such that

$$ \sum_{k=1}^m M[e_k,z_k] \leq  \frac{\rho \ln \eta}{\epsilon} + (1+\epsilon) \min_a \sum_{k=1}^m M[a,z_k] $$
\end{theorem}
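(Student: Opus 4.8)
The plan is to prove this by the standard multiplicative-weights (``Hedge'') analysis. First I would reduce to the case that every penalty lies in $[0,1]$: dividing the whole table $M$ by $\rho$ (a valid bound on the entries in the intended application) and multiplying the final inequality back through by $\rho$ turns the additive term into $\frac{\rho\ln\eta}{\epsilon}$ while leaving the multiplicative factor $(1+\epsilon)$ unchanged, since that factor is scale-invariant. So assume henceforth that $M[a,z]\in[0,1]$.

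For the normalized problem, maintain weights $w^{(k)}_1,\dots,w^{(k)}_\eta$ with $w^{(1)}_a=1$, set $W^{(k)}=\sum_a w^{(k)}_a$, and before event $z_k$ is revealed choose expert $e_k$ at random with $\Pr[e_k=a]=w^{(k)}_a/W^{(k)}$; after $z_k$ is seen, update $w^{(k+1)}_a=w^{(k)}_a\,\beta^{\,M[a,z_k]}$ for a fixed $\beta\in(0,1)$ chosen below. The main step is a potential argument that sandwiches $W^{(m+1)}$. Using the convexity inequality $\beta^{x}\le 1-(1-\beta)x$ for $x\in[0,1]$ together with $1-y\le e^{-y}$ gives, after telescoping,
\[
W^{(m+1)}\ \le\ \eta\,\exp\!\Bigl(-(1-\beta)\sum_{k=1}^m L_k\Bigr),\qquad L_k:=\sum_a \frac{w^{(k)}_a}{W^{(k)}}\,M[a,z_k],
\]
where $L_k$ is exactly the expected penalty incurred at step $k$. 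On the other hand, dropping all terms but that of the best expert $a^\star=\argmin_a\sum_k M[a,z_k]$ gives $W^{(m+1)}\ge w^{(m+1)}_{a^\star}=\beta^{\sum_k M[a^\star,z_k]}$. Taking logarithms of these two bounds and rearranging yields
\[
\sum_{k=1}^m L_k\ \le\ \frac{\ln\eta}{1-\beta}\ +\ \frac{\ln(1/\beta)}{1-\beta}\sum_{k=1}^m M[a^\star,z_k].
\]
Choosing $\beta$ so that $\ln(1/\beta)=(1+\epsilon)(1-\beta)$ (which forces $1-\beta=\Theta(\epsilon)$, hence, after possibly shrinking $\epsilon$ by a constant, $1/(1-\beta)\le 1/\epsilon$) makes the coefficient of the best-expert sum exactly $1+\epsilon$ and the additive term at most $\frac{\ln\eta}{\epsilon}$, which is the claimed bound for the normalized problem; undoing the normalization restores the factor $\rho$.

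Two loose ends remain. First, the guarantee just derived is on the \emph{expected} penalty of a randomized online strategy; since the event sequence $Z$ is fixed in advance, for each $Z$ there is a realization of the internal coins whose actual penalty is at most the expectation, and fixing that realization gives a genuinely online (deterministic) sequence $E=e_1,\dots,e_m$ meeting the bound --- and in any case the deterministic weighted-majority variant gives the same bound with $(1+\epsilon)$ replaced by $2(1+\epsilon)$, which is enough for every use made of the theorem here. Second, the only part requiring real care is pinning down $\beta$ and the accompanying elementary logarithmic estimates so that the constants land exactly in the form $\frac{\rho\ln\eta}{\epsilon}+(1+\epsilon)(\cdot)$; conceptually the argument is routine and I do not expect a genuine obstacle.
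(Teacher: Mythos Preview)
The paper does not prove this theorem at all: it is quoted as a classical result with a reference to the survey~\cite{DBLP:journals/toc/AroraHK12}, and then applied as a black box. So there is no ``paper's own proof'' to compare against; your write-up is simply supplying the standard multiplicative-weights/Hedge argument that the paper takes for granted, and that argument is correct.

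One small point deserves tightening. In your derandomization step you argue that, since $Z$ is fixed, some realization of the coins achieves at most the expectation, and fixing those coins gives an online sequence $E$. That is not quite right: choosing the coin realization after seeing all of $Z$ makes the resulting strategy offline, not online. Your fallback --- using the deterministic weighted-majority variant at the cost of a constant in the multiplicative factor --- is the honest fix and is entirely sufficient for every use the paper makes of the theorem (only the asymptotic form matters in Lemma~\ref{bsta} and Theorem~\ref{tha}). Alternatively, one can read the theorem's ``online choice of expert'' as allowing a randomized online rule, in which case the expected-loss guarantee of Hedge is already the statement.

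Your reading of $\rho$ as an upper bound on the entries of $M$ (so that dividing by $\rho$ normalizes losses to $[0,1]$) is the right way to make sense of the additive $\frac{\rho\ln\eta}{\epsilon}$ term; the paper is using $\rho$ both as the number of event types and, implicitly, as a (very generous) penalty bound, which does hold in the BST application since the cost of any reasonable epoch algorithm is at most $O(nf(n))\le n^{f(n)}=\rho$.
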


Now we apply this theorem to BSTs.
Let $X=x_1, x_2, \ldots x_m$ be a sequence of searches in a BST-model data structure containing the integers $1..n$; for convenience we assume $m$ is a multiple of $f(n)$, and we assume $f(n)\geq n$. 
We let the events be the $n^{f(n)}$ different search sequences of length $f(n)$; thus $\rho=n^{f(n)}$.
We divide the search sequence into \textit{epochs} of size $f(n)$, and denote the $i$th epoch as $z_i$. Note that each epoch $z_i$ is an event.

How many BST-model algorithms are there to execute an epoch, assuming at the beginning and end of each epoch the BST is in a canonical state (e.g.~a left path)? There are at most 
$n^{f(n)} 5^{O(nf(n))}$. This is because you can encode an algorithm by encoding the  $O(n f(n))$ fundamental operations spent to execute each of the $n^{f(n)}$ possible epochs, and the 5 possible BST unit-cost operations at unit of time executing each epoch. We view the set of online BST epoch algorithms as the experts. Thus $e \leq n^{f(n)} 5^{O(nf(n))}$. 
This is a gross overestimate as this counts the offline algorithms, and does not cull those which do not properly execute each search. The cost $M[a,z_k]$ is simply the runtime of BST epoch algorithm $a$ on epoch $k$.

Plugging this into Theorem~\label{main} gives:

\begin{lemma} \label{bsta} There is a way to choose an algorithm $A_k$ at each epoch such that:
$$ \sum_{k=1}^m M[A_k,x_k] = O\left( n^{f(n)}n f(n)  + \min_a \sum_{k=1}^m M[a,x_k] \right) $$
\end{lemma}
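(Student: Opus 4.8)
The plan is to obtain Lemma~\ref{bsta} as a direct instantiation of the Weighted Majority Algorithm theorem under the dictionary already set up above. Concretely, I would take the experts to be the online BST epoch algorithms, so that $\eta$ (their number) satisfies $\eta \le n^{f(n)} 5^{O(nf(n))}$ by the stated encoding count; the events to be the epochs, so that the number of distinct events is $\rho = n^{f(n)}$, the count of length-$f(n)$ search sequences; the event stream $z_1, z_2, \ldots$ to be the partition of $X$ into its $m/f(n)$ consecutive epochs; and the penalty entry $M[a,z]$ to be the BST-model cost of running epoch algorithm $a$ on epoch $z$. Before invoking the theorem I would check that this is a legal instance: the table $M$ is determined in advance (each entry depends only on an algorithm and an epoch, and nothing revealed later changes it), and each penalty is bounded, since the encoding argument already shows every epoch algorithm makes only $O(nf(n))$ unit-cost moves per epoch, so $M[a,z] = O(nf(n)) \le \rho$, well within the range for which the quoted form of the theorem is stated (rescaling by an absolute constant if one wants the penalties normalized).

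With the instance in place I would fix $\epsilon$ to any constant, say $\epsilon = 1$, let $A_k = e_k$ be the expert the Weighted Majority Algorithm commits to before epoch $k$, and read off
$$ \sum_k M[A_k,z_k] \;\le\; \rho \ln \eta \;+\; 2 \min_a \sum_k M[a,z_k]. $$
It then remains to estimate the first term. We have $\ln \eta \le \ln\!\big(n^{f(n)}5^{O(nf(n))}\big) = f(n)\ln n + O(nf(n))$, and since $f(n)\ge n\ge \ln n$ the first summand is swallowed by the second, giving $\ln\eta = O(nf(n))$ and hence $\rho\ln\eta = O\!\big(n^{f(n)}\,n f(n)\big)$. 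Substituting, and absorbing the constant $2$ into the $O(\cdot)$, yields
$$ \sum_k M[A_k,z_k] \;=\; O\!\left( n^{f(n)}\,n f(n) \;+\; \min_a \sum_k M[a,z_k]\right), $$
which is the assertion of Lemma~\ref{bsta} once the epoch-indexed sums are matched with the notation of the statement.

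I do not expect a genuine obstacle in this lemma: all of the conceptual work lives in the preceding setup---deciding that epochs are the events, online epoch algorithms are the experts, and establishing the crude count $\eta \le n^{f(n)}5^{O(nf(n))}$---after which the proof is one substitution plus the estimate $\ln\eta = O(nf(n))$. The only points I would be careful about are (i) confirming the penalties lie in whatever range the invoked version of the theorem assumes, so that the additive term really is $O(\rho\ln\eta)$ rather than something scaled by the cost magnitude, and (ii) noting that replacing the true number of online epoch algorithms by the over-count $n^{f(n)}5^{O(nf(n))}$ is harmless, because the regret term $\rho\ln\eta$ is monotone in $\eta$, so an over-estimate can only weaken the bound, not invalidate it. It is also worth recording---for the later use of this lemma in the proof of Theorem~\ref{tha}---that the expert is chosen before each epoch, so $A_k$ depends only on $z_1,\dots,z_{k-1}$ and the resulting selection is online at the granularity of epochs.
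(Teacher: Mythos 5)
Your proposal is correct and matches the paper's argument, which is simply to plug the epoch/expert setup (events $=$ length-$f(n)$ epochs with $\rho=n^{f(n)}$, experts $=$ epoch algorithms with $\eta\le n^{f(n)}5^{O(nf(n))}$, penalties $=$ epoch runtimes) into the Weighted Majority theorem and observe that the additive term is $O(n^{f(n)}nf(n))$. Your extra checks on the penalty range and the harmlessness of over-counting experts go slightly beyond what the paper writes down but do not change the route.
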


Now, recall  $\OPT(X)$ is the fastest any offline BST-model algorithm can execute the search sequence $X$.

\begin{lemma}
Given a search sequence $X$ of length $m$ on a set of size $n$, let $z_i$ be a search sequence of size $n$ which is the $i$th epoch of $S$. Then for any $f(n)\geq n$
$$\OPT(S)=\Theta\left( \sum_{i=1}^{m/f(n)} (\OPT(z_i)+f(n)) \right)$$
\end{lemma}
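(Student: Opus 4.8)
The plan is to prove the two directions of the $\Theta$ separately, both by fairly direct simulation arguments. For the upper bound, $\OPT(S) = O\left(\sum_i (\OPT(z_i)+f(n))\right)$, I would take for each epoch $z_i$ an optimal offline BST-model algorithm witnessing $\OPT(z_i)$, starting and ending in the canonical state (a left path). Concatenating these gives a valid offline algorithm for $S$: after each epoch we pay at most $O(n)$ extra to re-canonicalize the tree into the left path before the next epoch starts (rebuilding a tree on $n$ nodes from any configuration costs $O(n)$ rotations/moves in the BST model), and since $f(n) \ge n$ this re-canonicalization cost is absorbed into the $+f(n)$ term. Summing over the $m/f(n)$ epochs gives the bound. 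The only subtlety is that $\OPT(z_i)$ as defined might already assume a canonical start state, in which case the per-epoch overhead is literally just the $O(n)$ reset; I would state the canonical-state convention explicitly so this is clean.

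For the lower bound, $\OPT(S) = \Omega\left(\sum_i (\OPT(z_i)+f(n))\right)$, fix an optimal offline algorithm $\mathcal{A}$ for all of $S$ with cost $\OPT(S)$, and let $c_i$ be the cost $\mathcal{A}$ incurs during epoch $i$, so $\OPT(S) = \sum_i c_i$. Two things must be shown: first, $c_i = \Omega(\OPT(z_i))$ up to the reset overhead, and second, $\sum_i c_i = \Omega(\sum_i f(n)) = \Omega(m)$. The second is immediate since each epoch has $f(n) \ge n \ge 1$ searches and each search costs at least $1$, so $c_i \ge f(n)$, hence $\OPT(S) \ge m$; combined with $\OPT(S) \ge \frac{1}{2}\sum_i c_i$ this already gives $\OPT(S) = \Omega\left(\sum_i \tfrac{1}{2}(c_i + f(n))\right)$, so it suffices to show $c_i \ge \OPT(z_i)$ up to the additive $O(n)$ reset term. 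For this, note that $\mathcal{A}$ restricted to epoch $i$ is a valid (offline) BST algorithm executing the searches of $z_i$, but starting from whatever configuration $\mathcal{A}$ happened to be in at the start of epoch $i$ and ending in whatever configuration it is in at the end; prepending $O(n)$ operations to bring the canonical left path into that start configuration, and appending $O(n)$ operations to return to canonical, yields a canonical-to-canonical algorithm for $z_i$ of cost $c_i + O(n)$, which is $\ge \OPT(z_i)$ by definition of $\OPT(z_i)$. Hence $c_i \ge \OPT(z_i) - O(n)$, and since $f(n) \ge n$ the $-O(n)$ is dominated by the $f(n)$ already accounted for.

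Combining: $\OPT(S) \ge \sum_i c_i \ge \sum_i (\OPT(z_i) - O(n))$ and $\OPT(S) \ge \sum_i f(n)$; averaging these two inequalities and using $f(n) \ge n = \Omega(n)$ to swallow the $-O(n)$ gives $\OPT(S) = \Omega\left(\sum_i(\OPT(z_i) + f(n))\right)$, matching the upper bound.

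The main obstacle is purely bookkeeping around the canonical-state convention: one must be careful that "$\OPT$ of an epoch" is well-defined only once we pin down its initial and final tree state, and that transitioning between an arbitrary tree and the canonical state costs $O(n)$, which is genuinely needed so that concatenating per-epoch optima (upper bound) and slicing a global optimum into epochs (lower bound) both only lose an additive $O(n)$ per epoch — harmless precisely because the statement carries the $+f(n)$ slack with $f(n) \ge n$. No deeper structural fact about BSTs or the geometric view is required; the $\IRB$ and Wilber bounds play no role here.
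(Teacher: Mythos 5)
Your proof is correct and follows essentially the same route as the paper, whose one-line argument rests on exactly the fact you use: any BST can be transformed into any other (in particular, into the canonical state at each epoch boundary) in $O(n)$ operations, and this overhead is absorbed by the $+f(n)$ term since $f(n)\geq n$. Your write-up simply makes explicit the two directions and the per-epoch cost accounting that the paper leaves implicit.
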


\begin{proof}
Follows directly from the fact that any BST can be converted into any other in $O(n)$ time. Thus you can be forced into a canonical state every $n$ searches and this only changes the optimal time by a constant.
\end{proof}

These lemmas give a proof of Theorem~\ref{tha}. Specifically, if there is an unknown online BST algorithm with runtime $O(\OPT(X)+f(n))$, then using the weighted majority algorithm to pick an algorithm to run every $f(n)$ steps yields an online BST algorithm that runs in time $O(\OPT(X)+n^{f(n)}n f(n))$, which is $O(\OPT(X))$ for sufficiently long sequences $X$.


\bibliography{bib,extrabibs}

\newcommand{\etalchar}[1]{$^{#1}$}
\begin{thebibliography}{CMSS00}

\bibitem[AHK12]{DBLP:journals/toc/AroraHK12}
Sanjeev Arora, Elad Hazan, and Satyen Kale.
\newblock The multiplicative weights update method: a meta-algorithm and
  applications.
\newblock {\em Theory of Computing}, 8(1):121--164, 2012.

\bibitem[AM78]{DBLP:journals/jacm/AllenM78}
Brian Allen and J.~Ian Munro.
\newblock Self-organizing binary search trees.
\newblock {\em J. ACM}, 25(4):526--535, 1978.

\bibitem[BCDI07]{DBLP:journals/tcs/BadoiuCDI07}
Mihai Badoiu, Richard Cole, Erik~D. Demaine, and John Iacono.
\newblock A unified access bound on comparison-based dynamic dictionaries.
\newblock {\em Theor. Comput. Sci.}, 382(2):86--96, 2007.

\bibitem[BCFL12]{DBLP:conf/icalp/BoseCFL12}
Prosenjit Bose, S{\'e}bastien Collette, Rolf Fagerberg, and Stefan Langerman.
\newblock De-amortizing binary search trees.
\newblock In Artur Czumaj, Kurt Mehlhorn, Andrew~M. Pitts, and Roger
  Wattenhofer, editors, {\em ICALP (1)}, volume 7391 of {\em Lecture Notes in
  Computer Science}, pages 121--132. Springer, 2012.

\bibitem[BCK03]{DBLP:journals/algorithmica/BlumCK03}
Avrim Blum, Shuchi Chawla, and Adam Kalai.
\newblock Static optimality and dynamic search-optimality in lists and trees.
\newblock {\em Algorithmica}, 36(3):249--260, 2003.

\bibitem[BDIL13]{DBLP:journals/corr/abs-1304-6897}
Prosenjit Bose, Karim Dou\"{\i}eb, John Iacono, and Stefan Langerman.
\newblock The power and limitations of static binary search trees with lazy
  finger.
\newblock {\em CoRR}, abs/1304.6897, 2013.

\bibitem[CMSS00]{DBLP:journals/siamcomp/ColeMSS00}
Richard Cole, Bud Mishra, Jeanette~P. Schmidt, and Alan Siegel.
\newblock On the dynamic finger conjecture for splay trees. part i: Splay
  sorting log n-block sequences.
\newblock {\em SIAM J. Comput.}, 30(1):1--43, 2000.

\bibitem[Col00]{DBLP:journals/siamcomp/Cole00}
Richard Cole.
\newblock On the dynamic finger conjecture for splay trees. part ii: The proof.
\newblock {\em SIAM J. Comput.}, 30(1):44--85, 2000.

\bibitem[Der09]{jdthesis}
Jonathan Derryberry.
\newblock {\em Adaptive Binary Search Trees}.
\newblock PhD thesis, CMU, 2009.

\bibitem[DHI{\etalchar{+}}09]{DBLP:conf/soda/DemaineHIKP09}
Erik~D. Demaine, Dion Harmon, John Iacono, Daniel~M. Kane, and Mihai Patrascu.
\newblock The geometry of binary search trees.
\newblock In Claire Mathieu, editor, {\em SODA}, pages 496--505. SIAM, 2009.

\bibitem[DHIP07]{DBLP:journals/siamcomp/DemaineHIP07}
Erik~D. Demaine, Dion Harmon, John Iacono, and Mihai Patrascu.
\newblock Dynamic optimality - almost.
\newblock {\em SIAM J. Comput.}, 37(1):240--251, 2007.

\bibitem[DIL{\"O}13]{DBLP:journals/corr/abs-1304-7604}
Erik~D. Demaine, John Iacono, Stefan Langerman, and {\"O}zg{\"u}r {\"O}zkan.
\newblock Combining binary search trees.
\newblock {\em CoRR}, abs/1304.7604, 2013.

\bibitem[DS09]{DBLP:conf/wads/DerryberryS09}
Jonathan Derryberry and Daniel~Dominic Sleator.
\newblock Skip-splay: Toward achieving the unified bound in the bst model.
\newblock In Frank K. H.~A. Dehne, Marina~L. Gavrilova, J{\"o}rg-R{\"u}diger
  Sack, and Csaba~D. T{\'o}th, editors, {\em WADS}, volume 5664 of {\em Lecture
  Notes in Computer Science}, pages 194--205. Springer, 2009.

\bibitem[Fox11]{DBLP:conf/wads/Fox11}
Kyle Fox.
\newblock Upper bounds for maximally greedy binary search trees.
\newblock In Frank Dehne, John Iacono, and J{\"o}rg-R{\"u}diger Sack, editors,
  {\em WADS}, volume 6844 of {\em Lecture Notes in Computer Science}, pages
  411--422. Springer, 2011.

\bibitem[Har06]{dhthesis}
Dion Harmon.
\newblock {\em New Bounds on Optimal Binary Search Trees}.
\newblock PhD thesis, MIT, 2006.

\bibitem[Luc88]{jltr}
Joan~M. Lucas.
\newblock Canonical forms for competitive binary search tree algorithms.
\newblock Technical Report DCS-TR-250, Rutgers University, 1988.

\bibitem[Sle11]{sleatortalk}
Daniel Sleator.
\newblock Achieving the unified bound in the bst model.
\newblock In {\em 5th Bertinoro Workshop on Algorithms and Data Structures},
  2011.
\newblock Talk.

\bibitem[ST85a]{DBLP:journals/cacm/SleatorT85}
Daniel~Dominic Sleator and Robert~Endre Tarjan.
\newblock Amortized efficiency of list update and paging rules.
\newblock {\em Commun. ACM}, 28(2):202--208, 1985.

\bibitem[ST85b]{DBLP:journals/jacm/SleatorT85}
Daniel~Dominic Sleator and Robert~Endre Tarjan.
\newblock Self-adjusting binary search trees.
\newblock {\em J. ACM}, 32(3):652--686, 1985.

\bibitem[Sub96]{DBLP:journals/jal/Subramanian96}
Ashok Subramanian.
\newblock An explanation of splaying.
\newblock {\em J. Algorithms}, 20(3):512--525, 1996.

\bibitem[Wil89]{DBLP:journals/siamcomp/Wilber89}
Robert~E. Wilber.
\newblock Lower bounds for accessing binary search trees with rotations.
\newblock {\em SIAM J. Comput.}, 18(1):56--67, 1989.

\end{thebibliography}

\end{document}